\newcommand{\orcid}[1]{\,\href{https://orcid.org/#1}{\includegraphics[width=8pt]{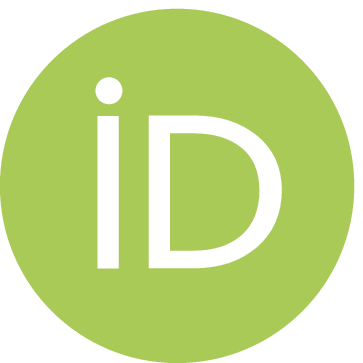}}}
\renewcommand{\S}{\mathcal{S}}
\newcommand{\scirc}{\textsc{Simple Circuit}}
\newcommand{\spath}{\textsc{Simple Path}}
\newtheorem{theorem}{Theorem}
\newtheorem{lemma}{Lemma}
\theoremstyle{remark}
\title{Linking disjoint segments into a simple polygon is hard}
\author{Rain Jiang\orcid{0000-0002-0144-942X}\qquad
Kai Jiang\orcid{0000-0001-8165-0571}\qquad
Minghui Jiang\orcid{0000-0003-1843-9292}\,\thanks{\texttt{ dr.minghui.jiang at gmail.com}}\medskip\\
Home School, USA}
\date{}
\begin{document}

\maketitle

\begin{abstract}
	Deciding whether a family of disjoint line segments in the plane
	can be linked into a simple polygon (or a simple polygonal chain)
	by adding segments between their endpoints is NP-hard.
\end{abstract}

\section{Introduction}

Given a family $\S$ of $n$ closed line segments in the plane,
\scirc\ (respectively, \spath)
is the problem of deciding whether
these segments can be linked into
a simple polygon (respectively, a simple polygonal chain)
by adding segments between their endpoints.

Rappaport~\cite{Ra89} proved that \scirc\ is NP-hard
if the segments in $\S$ are allowed to intersect at their common endpoints,
and asked whether the problem remains NP-hard when the segments are disjoint.
Later, Bose, Houle, and Toussaint~\cite{BHT01} asked
whether the related problem \spath\ is NP-hard
when the segments in $\S$ are disjoint.
More recently, T\'oth~\cite{To06} asked about the complexity of \spath\ again,
with a special interest in the case
when the segments in $\S$ are both disjoint and axis-parallel.

In this note,
we prove the following theorem:

\begin{theorem}\label{thm:hard}
	\scirc\ and \spath\ are both NP-hard
	even if the segments in $\S$ are disjoint and have only four distinct orientations.
\end{theorem}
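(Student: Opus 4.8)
The plan is to reduce from \textsc{Hamiltonian Circuit} in planar graphs of maximum degree three, which is NP-complete (Garey, Johnson, and Tarjan), and to obtain \spath\ from the same construction by a simple modification (endpoint gadgets at two designated vertices). The whole design rests on one structural observation: since the segments of $\S$ are pairwise disjoint, no two of them share an endpoint, so along any simple polygon that links them the edges alternate between a segment of $\S$ and an added ``connector'' segment. Thus a solution to \scirc\ is exactly a non-crossing perfect matching $M$ on the $2n$ endpoints of $\S$ such that $\S\cup M$ traces a single simple cycle and no connector of $M$ meets a segment of $\S$ away from a shared endpoint, and \spath\ asks for such an $M$ with one edge removed so that $\S\cup M$ is a simple path. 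All gadgets are designed around this connector-matching picture; in particular, since \emph{every} segment of $\S$ must lie on the polygon, each gadget must still be coverable by $M$ even when the corresponding part of $G$ is unused.

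Given a planar graph $G$ of maximum degree three, I would fix a planar orthogonal drawing of $G$ with vertices at grid points and edges drawn as rectilinear paths, and scale it up by a large factor. Each edge of $G$ becomes a \emph{wire}: between two long ``fence'' segments that block any connector from leaving the corridor, one places a long, densely packed pattern of short segments, arranged so that $M$ can cover all of them in exactly two ways --- a \emph{through} mode, in which the polygon enters one end of the corridor and leaves the other, and a \emph{split} mode, in which the polygon covers the wire by two cul-de-sac excursions, one launched from each end. A wire used by the Hamiltonian circuit is matched in through mode, one not used in split mode; either way the wire is fully covered. Each bend of $G$ is realized by a small right-angle turn gadget carrying both modes. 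The four orientations appear precisely here: the fences and the vertex-gadget boundaries are horizontal and vertical, while the packed pieces inside the corridors use the two diagonal directions, slope $+1$ and slope $-1$.

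For each vertex $v$ of $G$ I would build a \emph{vertex gadget}: a room bounded by horizontal and vertical segments, with up to three ports where the incident wire corridors attach, plus a handful of short interior segments. The room is engineered so that the non-crossing matchings $M$ covering all of its interior segments correspond exactly to the local behaviours of a Hamiltonian circuit at $v$ --- the polygon runs straight through between some two of the (at most three) ports and makes a cul-de-sac excursion into each remaining port --- with every choice of the through-pair realizable and nothing else. Showing that the vertex gadget (and the turn gadget) admits exactly the intended local matchings and no spurious one --- none got by a connector threading between the tightly packed interior segments, none by two connectors crossing each other (which a simple polygon forbids) --- is the technical heart of the proof, and the step I expect to be the main obstacle; I would settle it by a bounded case analysis, using that the fences confine everything to the room and that the interior segments are packed so each endpoint has only a constant number of feasible partners.

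Finally, the equivalence. If $G$ has a Hamiltonian circuit $C$, match the wires of edges in $C$ in through mode, the other wires in split mode, and each vertex gadget by the pattern $C$ dictates; one checks that these local matchings glue into a single non-crossing perfect matching $M$ with $\S\cup M$ a simple polygon. Conversely, given a valid $M$, every wire is in through or split mode and, by the gadget analysis, every room runs straight through exactly two of its ports; since the polygon is one closed curve that must pass through every room, contracting each through-mode wire to an edge and each room to a vertex turns $\S\cup M$ into a connected $2$-regular spanning subgraph of $G$, i.e.\ a Hamiltonian circuit. The same argument with ``path'' and ``simple chain'' in place of ``circuit'' and ``simple polygon'', with endpoint gadgets at two designated vertices, gives the reduction for \spath. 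Everything outside the gadget-forcing lemmas is routine scaling and routing bookkeeping, and since every segment introduced is horizontal, vertical, or of slope $\pm 1$, the family $\S$ produced is pairwise disjoint and has only four distinct orientations, as claimed.
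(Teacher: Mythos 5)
Your outline is a from-scratch reduction (wires with through/split modes, vertex rooms, fences), which is a legitimate architecture for this kind of problem, but as written it has a genuine gap: the gadgets are never actually constructed, and the forcing lemmas --- that each wire admits exactly the two intended covering matchings, that each room admits exactly the intended port behaviours, and above all that no connector can sneak between gadgets or make a spurious visibility connection --- are deferred to ``a bounded case analysis'' that you yourself flag as the main obstacle. In this problem that deferred step \emph{is} the proof. The difficulty is not combinatorial but metric: with disjoint segments every endpoint generically sees many other endpoints, so one must argue quantitatively about which visibility edges exist. The paper's entire technical content is one such argument (Lemma~\ref{lem:gadget}): a perturbation by $\delta = \frac{1}{800n^2}$ creates a viewing cone at $a'_1$ of angle less than $\delta/8$, while a lattice-triangle area bound~\eqref{eq:scaling} shows that any unintended integer endpoint would subtend an angle greater than $\delta/8$, so no spurious visibility edge exists. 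Your proposal contains no counterpart to this; ``the fences confine everything to the room'' and ``each endpoint has only a constant number of feasible partners'' are exactly the claims that need a quantitative geometric proof, and for densely packed slope-$\pm1$ pieces inside a corridor they are not obviously true.

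You should also be aware that the paper takes a much more economical route: it does not rebuild the reduction from \textsc{Hamiltonian Path}/\textsc{Circuit} at all. It starts from Rappaport's existing NP-hardness proof for \scirc\ on interior-disjoint axis-parallel segments that may share endpoints, and performs only a \emph{local} transformation replacing each shared endpoint by a seven-segment gadget of disjoint segments (four orientations arise from the two diagonal segments $a'b'$ and the perturbed $A'_1$ in the four gadget variants). Lemma~\ref{lem:link} then shows linkability is preserved because the gadget segments are forced into a unique internal chain $o'a, A'_1, B_1, a'b', B_2, A_2, o''b$. For \spath, one gadget is extended so that the chain is forced to start and end inside it. If you pursue your own construction, the work you must still do is comparable to designing and verifying all of Rappaport's vertex and edge gadgets plus the disjointness perturbation; reusing his theorem as a black box, as the paper does, reduces the burden to a single gadget and a single visibility lemma.
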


We prove the theorem in two steps.
First, we modify the construction in Rappaport's proof
of NP-hardness of \scirc\ on not necessarily disjoint segments~\cite{Ra89}
to show that the problem remains NP-hard on disjoint segments.
Next, we modify the construction further
to prove that \spath\ is also NP-hard on disjoint segments.
In the following, we briefly review Rappaport's proof,
which is based on a polynomial reduction from the NP-hard problem
\textsc{Hamiltonian Path} in planar cubic graphs.

\medskip
For any family $\S$ of closed segments in the plane,
denote by $V(\S)$ the set of endpoints of the segments in $\S$.
For any two endpoints $p$ and $q$ in $V(\S)$,
we call the open segment $pq$ a \emph{visibility edge}
if it does not intersect any closed segment in $\S$.

\medskip
Given a planar cubic graph $G$ with $n \ge 4$ vertices,
the reduction~\cite{Ra89} first obtains a rectilinear planar layout of the planar graph
using an algorithm of Rosenstiehl and Tarjan~\cite{RT86},
then constructs a family $\S$ of $O(n)$ segments
following the rectilinear planar layout,
such that $G$ admits a Hamiltonian path
if and only if $\S$ can be linked into a simple polygon
by adding visibility edges between points in $V(\S)$.
The segments in $\S$ are axis-parallel and interior-disjoint,
but may intersect at common endpoints.
Each endpoint in $V(\S)$ is incident to at most one horizontal segment
and at most one vertical segment in $\S$.
Since the rectilinear planar layout
is of width at most $2n-4$ and of height at most $n$~\cite{RT86},
all coordinates of endpoints in $V(\S)$ are integers of magnitude
$O(n^2)$;
indeed a closer look at the construction~\cite[Figure~8]{Ra89}
shows that
$V(\S) \subseteq [1, 22n^2]\times [1, 11n]$.
The reduction is hence strongly polynomial.
Consequently, \scirc\ is strongly NP-hard,
on not necessarily disjoint segments.

\section{Modification for \scirc}

To show that \scirc\ remains strongly NP-hard on disjoint segments,
we will transform the family $\S$ of interior-disjoint axis-parallel segments,
which Rappaport constructed,
into a family $\S'$ of disjoint segments with four distinct orientations,
in polynomial time,
such that $\S$ can be linked into a simple polygon
if and only if $\S'$ can be linked into a simple polygon.
This transformation can be viewed as a reduction from \scirc\ on one type
of input to the same problem on another type of input.

\begin{figure}[htb]
	\centering\includegraphics{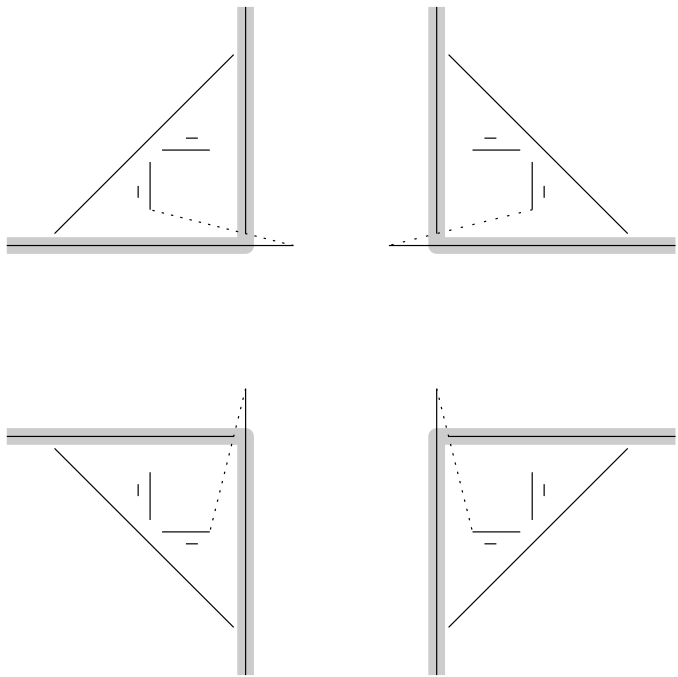}
	\caption{Gadgets for intersections of four different orientations.}\label{fig:turn4}
\end{figure}

To obtain $\S'$ from $\S$,
we first scale the integer coordinates of all segment endpoints
by a factor of $40$,
then locally modify each intersection between a horizontal segment
and a vertical segment into a gadget.
The gadgets come in four variants, one for each possible orientation of an intersection;
see Figure~\ref{fig:turn4}.
For a global picture of these modifications on a rectilinear simple polygon
with intersections of all four different orientations,
refer to Figure~\ref{fig:circuit}.

\begin{figure}[htbp]
	\centering\includegraphics{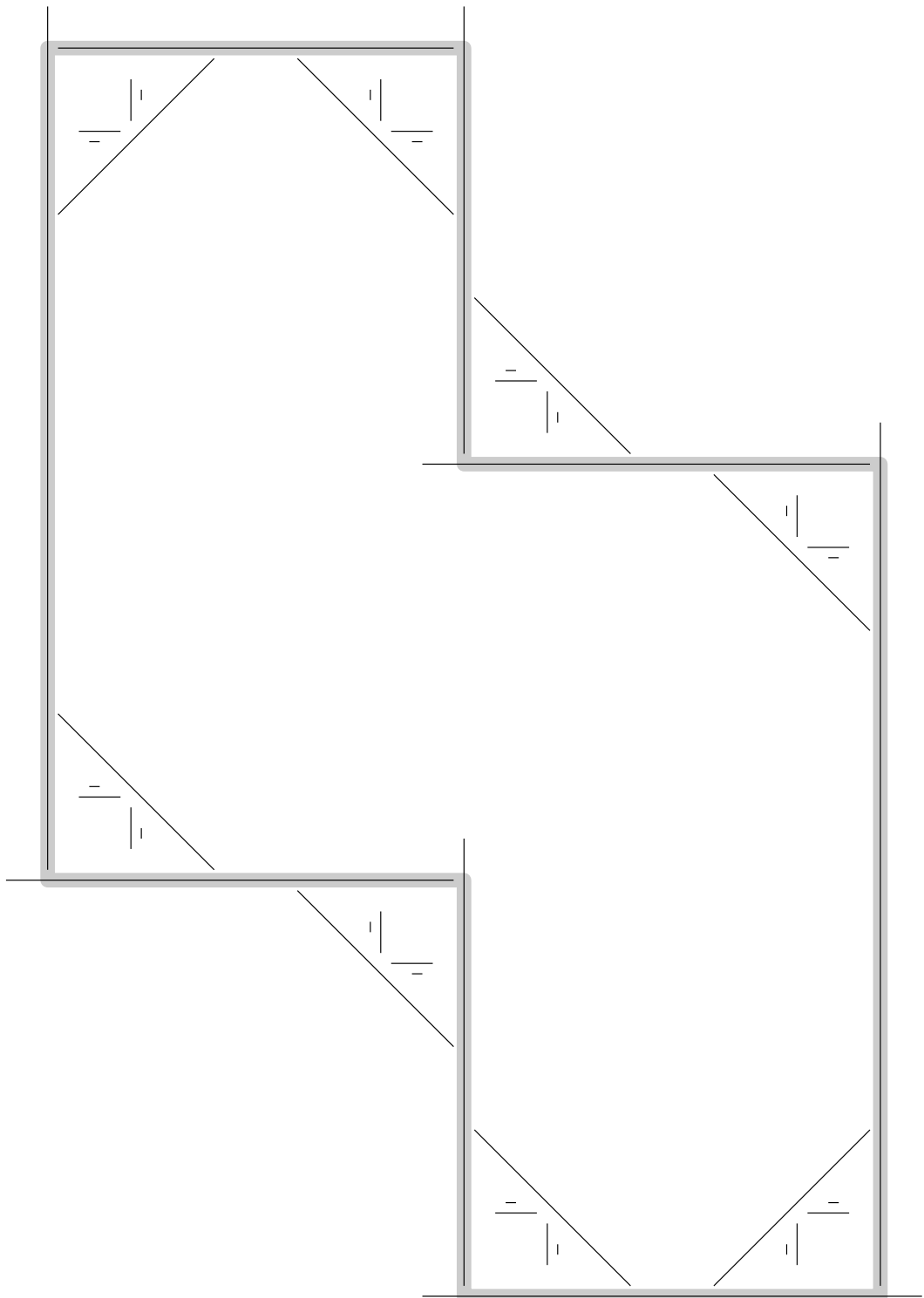}
	\caption{Local modifications of the intersections a rectilinear simple polygon
	into gadgets of disjoint segments.}\label{fig:circuit}
\end{figure}
\clearpage

\begin{figure}[htb]
	\includegraphics[width=.48\textwidth]{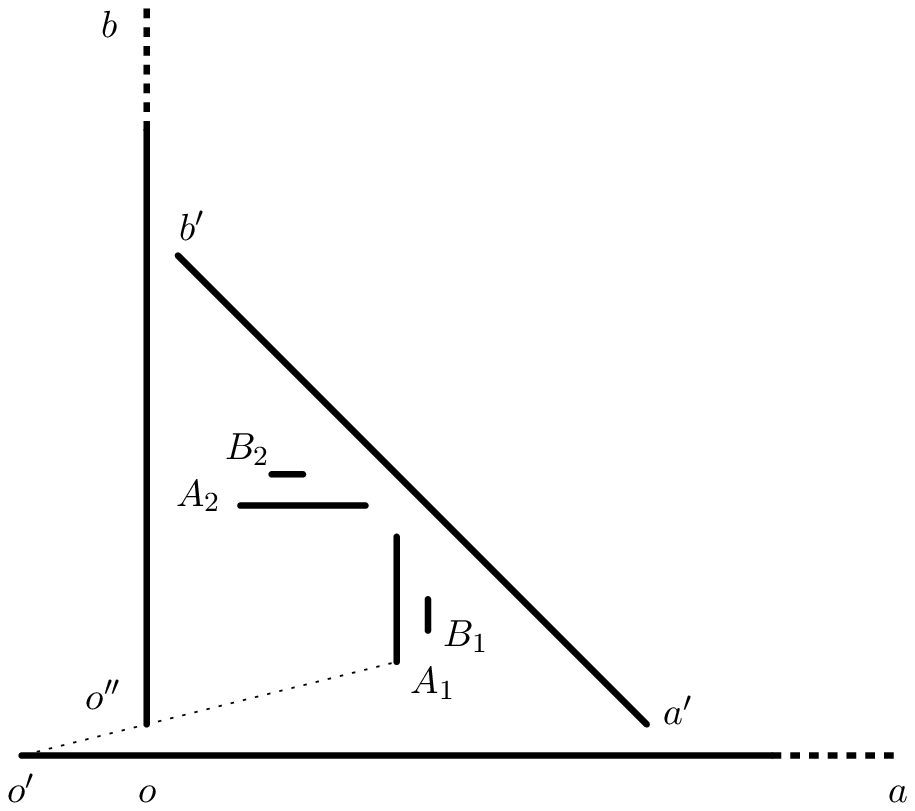}
	\hspace{\stretch1}
	\includegraphics[width=.48\textwidth]{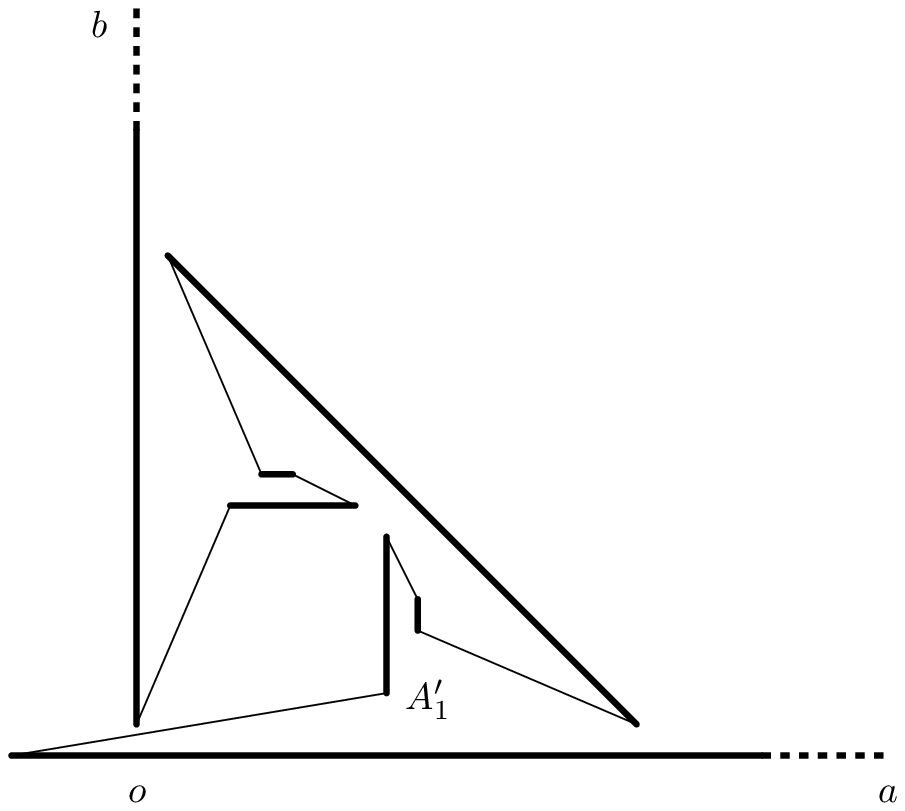}
	\caption{The gadget for the intersection $o$ between a horizontal segment $oa$
	and a vertical segment $ob$.
	Left:
	The coordinates of the segment endpoints relative to $o$ are
	$o = (0,0)$, $o' = (-4,0)$, $o'' = (0,1)$,
	$a' = (16,1)$, $b' = (1,16)$,
	$A_1 = 8 \times [3,7]$,
	$B_1 = 9 \times [4,5]$,
	$A_2 = [3,7] \times 8$,
	$B_2 = [4,5] \times 9$.
	The lower endpoint $a_1$ of $A_1$ is collinear with the two points $o'$ and $o''$.
	Right: An alternating path of segments and visibility edges in the gadget
	after extending $A_1$ to $A'_1$.
	The lower endpoint $a'_1$ of $A'_1$ can see the segment $o'a$
	at the endpoint $o'$.}\label{fig:gadget}
\end{figure}

By symmetry, it suffices to describe in detail only one variant of the gadget.
Refer to Figure~\ref{fig:gadget}.
For the intersection $o$ between a horizontal segment $oa$ and a vertical segment $ob$,
where $o$ is the left endpoint of $oa$ and the lower endpoint of $ob$,
the corresponding gadget is constructed as follows.
First,
separate the two segments $oa$ and $ob$ into two disjoint segments
$o'a$ and $o''b$,
by splitting their common endpoint $o$,
then moving one to $o'$ and the other to $o''$.
Next, add five disjoint segments
$A_1$, $B_1$, $A_2$, $B_2$, and $a'b'$.
Finally, extend the segment $A_1$ to a slightly longer segment $A'_1$
by moving its lower endpoint down for a distance of $\delta = \frac1{800 n^2}$,
from $a_1 = (8,3)$ to $a'_1 = (8,3-\delta)$.

In the presence of $\S'$,
we say that two points $p$ and $q$ can \emph{see} each other
if the open segment $pq$ is disjoint from all closed segments in $\S'$,
and we say that two segments $A$ and $B$ in $\S'$ can \emph{see} each other
if at least one of the four pairs of endpoints,
one of $A$ and one of $B$,
can see each other.
The gadgets we constructed have the following property of mutual invisibility:

\begin{lemma}\label{lem:gadget}
	The four segments $A'_1$, $B_1$, $A_2$, and $B_2$
	in each gadget cannot see any segments in other gadgets.
	Moreover, in each gadget,
	$B_1$ can only see $A'_1$ and $a'b'$,
	$B_2$ can only see $A_2$ and $a'b'$,
	$A_2$ cannot see $o'a$ and can see $o''b$ only at $o''$,
	and $A'_1$ can see $o'a$ only at $o'$.
\end{lemma}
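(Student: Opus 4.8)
The plan is to prove the lemma by a purely local, finite geometric analysis, which is legitimate because scaling every coordinate by $40$ separates distinct gadgets by much more than the $O(1)$ amount by which any gadget perturbs the layout. Work in coordinates relative to $o$, as in Figure~\ref{fig:gadget}: every intersection of Rappaport's construction sits at a point with both coordinates multiples of $40$, and the whole gadget at $o$ lies in the box $[-4,16]^2$ apart from the two tails of $o'a$ and $o''b$, which run rightward along $y=0$ and upward along $x=0$ to the nearest neighbouring gadget, itself at offset at least $36$. Reading off the listed coordinates, the four segments $A'_1,B_1,A_2,B_2$ lie in the much smaller box $[3-\delta,9]^2$, hence in the open triangle $R_o=\{x>0,\,y>0,\,x+y<17\}$, whose three sides lie on $o'a$ (along $y=0$), $o''b$ (along $x=0$), and $a'b'$ (along $x+y=17$); along these sides $o'a$ covers the bottom side entirely, $o''b$ covers the left side except for $0\le y<1$, and $a'b'$ covers the diagonal except near its two ends. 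So the boundary of $R_o$ is made of pieces of the three closed segments together with three narrow corner gaps, near $o$, near $a'$, and near $b'$.

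Any sight line from a point strictly inside $R_o$ to a point outside must cross the boundary, and crossing one of the three segments is blocked, so it must pass through a corner gap. I would first dispose of the gaps near $a'$ and near $b'$: a ray from one of the four segments through the gap near $a'$ leaves $R_o$ heading right and steeply down, crosses $y=0$ at offset less than $20$, and is therefore blocked by $o'a$; the gap near $b'$ is symmetric, blocked by $o''b$. Next, a short computation with the coordinates shows that no ray from $B_1$, $B_2$, $A_2$, or the upper endpoint $(8,7)$ of $A'_1$ can leave $R_o$ through the gap near $o$ at all: it either fails to pass below $o''$, hence crosses $o''b$, or it passes below $o''$ but then crosses $y=0$ at offset $\ge -4$, hence crosses $o'a$. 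What remains are sight lines that stay inside $R_o$ --- which meet no segment of $\S'$ other than the seven segments $o'a,o''b,a'b',A'_1,B_1,A_2,B_2$ of the gadget --- together with one possible leak: the lowered endpoint $a'_1$ escaping through the gap near $o$. Thus every ``see only'' clause of the lemma reduces to a finite check against those seven segments, and the only remaining content of the first sentence is that the leak from $a'_1$ does not reach another gadget.

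For the finite check: for $B_1=9\times[4,5]$, sight lines to the left are blocked by $A'_1$ at $x=8$ and sight lines up and to the right are blocked by $a'b'$ (which passes through $(9,8)$), so among the seven $B_1$ can only see $A'_1$ and $a'b'$, and it sees both --- for instance $(9,4)$ sees the endpoint $(8,7)$ of $A'_1$ and $(9,5)$ sees $a'$. The extension of $A_1$ to $A'_1$ only lengthens $A_1$ below the range of $B_1$ and does not change $B_1$'s sight lines, and the reflection across the diagonal of slope $1$ through $o$ --- which swaps $A_1\leftrightarrow A_2$ and $B_1\leftrightarrow B_2$ and fixes $a'b'$ --- carries the conclusion over to $B_2$. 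For $A_2=[3,7]\times 8$: each sight line from an endpoint of $A_2$ toward $o'$ crosses $o''b$, and each toward the far endpoint $a$ of $o'a$ crosses $a'b'$, so $A_2$ cannot see $o'a$; the sight lines from $(3,8)$ and $(7,8)$ to $o''$ meet no segment and reach $x=0$ only at $o''$, while those toward the far endpoint of $o''b$ cross $a'b'$, so $A_2$ sees $o''b$ only at $o''$. Finally, the open segment from $a'_1=(8,3-\delta)$ to $o'=(-4,0)$ passes through $(0,1-\tfrac{\delta}{3})$, just below $o''=(0,1)$, hence clears the closed segment $o''b$, and it stays strictly above $y=0$ until it reaches $o'$; so $a'_1$ sees $o'$, and the lowering by $\delta$ is essential here, because with $\delta=0$ the points $o',o'',a_1$ are collinear and the segment $a_1o'$ is blocked at $o''$. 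The remaining sight lines from $A'_1$ toward $o'a$ --- from $(8,7)$ to $o'$, and from either endpoint to $a$ --- cross $o''b$ or $a'b'$, so $A'_1$ sees $o'a$ only at $o'$.

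I expect the main obstacle to be the leftover case: a sight line escaping $R_o$ through the gap near $o$, necessarily from $a'_1$. Since $o',o'',a_1$ are collinear, this gap admits from the four segments only a cone with apex $a'_1$ of angular width $O(\delta)$, bounded by the rays $a'_1o'$ and $a'_1o''$; on the first $a'_1$ sees exactly $o'$, on the second it is blocked at $o''$, so the issue is the open cone between them, which clears $o'a$ (it meets $y=0$ at offset below $-4$) and $o''b$ (it meets $x=0$ below $y=1$) and then runs, unbounded, into the quadrant at $o$ that neither of the two crossing segments enters. To finish, I would show this cone meets no other segment of $\S'$: using $\delta=\tfrac1{800n^2}$ and the $O(n^2)$ bound on all scaled coordinates, the cone stays extremely thin throughout the instance, and combining this with the structure of Rappaport's rectilinear layout near an intersection --- the gadget variant having been chosen so that the cone opens into the empty quadrant at $o$ --- should show that such a thin cone cannot cross any segment of a distant gadget. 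Making this last estimate airtight, in particular ruling out that the narrow cone threads between segments of distant gadgets, is the delicate step; everything else is the routine verification above.
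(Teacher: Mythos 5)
Your local analysis (the triangle $R_o$, the three corner gaps, the finite visibility checks among the seven segments, and the observation that the only possible leak is from $a'_1$ through the window between $o'$ and $o''$) is sound and matches what the paper dismisses as ``easy to verify.'' But the heart of the lemma is precisely the step you defer: showing that the thin cone with apex $a'_1$ bounded by the rays to $o'$ and $o''$ lets $a'_1$ see nothing outside its gadget. You say you ``would show this cone meets no other segment of $\S'$'' and flag making this airtight as the delicate step --- so the key argument is missing, and moreover you are aiming at the wrong target. The cone is unbounded and will in general cross many distant segments; trying to prove it ``threads past'' or ``cannot cross'' them is both unnecessary and likely false. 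Since visibility is defined endpoint-to-endpoint, what is needed is that the open cone contains \emph{no endpoint} of $V(\S')$, and the paper gets this with a clean lattice argument that uses nothing about Rappaport's layout beyond the coordinate bounds: the cone's angular width at $o''$ is $\angle a_1 o'' a'_1 < \delta/8$ (using $|o''a'_1|>8$ and the collinearity of $o',o'',a_1$), while for any integer point $p$ the triangle $o'o''p$ has area at least $1/2$, so $\angle o'o''p > 2\cdot\frac{1/2}{|o'o''|\,|po''|} > \frac{1}{5\cdot 40\cdot 32n^2} = \delta/8$. Hence the first integer direction clockwise of $o''o'$ lies outside the cone, and the cone's interior contains no integer endpoint at all. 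Your proposal invokes only ``the cone stays extremely thin'' plus the structure of the layout near an intersection, which is not a proof and would not easily become one.

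A second omission: the lattice argument only excludes integer endpoints, but the perturbed endpoints $a'_1$ of \emph{other} gadgets are not integer points, so you must separately rule out two such endpoints seeing each other through their respective windows. The paper does this by noting that the ranges of slopes of escaping sight lines are disjoint for the four gadget orientations (the dotted lines in Figure~\ref{fig:turn4}); your proposal does not address this case at all.
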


\begin{figure}[htb]
	\centering\includegraphics{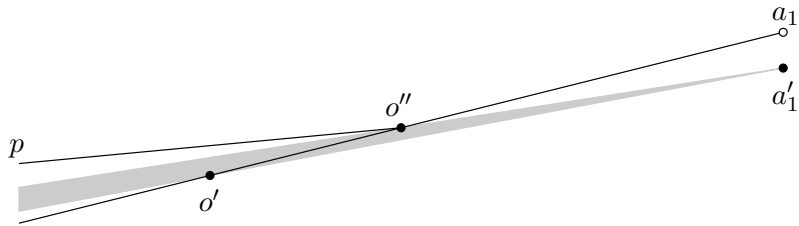}
	\caption{The narrow viewing angle from $a'_1$ between $o'$ and $o''$.}
	\label{fig:angle}
\end{figure}

\begin{proof}
	Clearly none of the eight endpoints of $A_1,B_1,A_2,B_2$
	in any gadget can see any endpoint outside the gadget.
	After extending $A_1$ to $A'_1$
	by moving its lower endpoint from $a_1$ to $a'_1$,
	the endpoint $a'_1$
	may see outside the gadget
	only through the gap between $o'$ and $o''$.
	Refer to Figure~\ref{fig:angle}.

	We next show that $a'_1$ still cannot see any integer endpoints from other gadgets.
	Note that
	$|a_1 a'_1| = \delta$,
	$|o'' a'_1| > 8$,
	and $\angle o'' a'_1 a_1 > \pi/2$.
	Thus
	$$
	\angle a_1 o'' a'_1 < \tan\angle a_1 o'' a'_1 < \frac{|a_1 a'_1|}{|o'' a'_1|} < \delta/8.
	$$

	Consider the ray that starts from $o''$ and goes through $o'$.
	Rotate this ray clockwise about $o''$ for a positive angle
	till it goes through another integer point $p$ in $V(\S')$.
	The area of the triangle $o'o''p$ is at least $1/2$
	since all three endpoints have integer coordinates.
	Recall that the endpoints of all segments in $\S$ are in the range
	$[1, 22n^2]\times [1, 11n]$,
	and is scaled by a factor of $40$
	by the transformation from $\S$ to $\S'$.
	Thus we have
	$|po''| < 40\cdot (22n^2 + 11n) < 40 \cdot 32 n^2$ for $n \ge 2$.
	Also note that $|o'o''| < 5$.
	Thus
	\begin{equation}\label{eq:scaling}
		\angle o'o''p > \sin\angle o'o''p
		= \frac{2\cdot\mathrm{area}(o'o''p)}{|o'o''|\cdot |po''|}
		> \frac1{5 \cdot 40 \cdot 32 n^2}
		= \frac1{8 \cdot 800 n^2}
		= \delta/8.
	\end{equation}

	Therefore we have $\angle a_1 o'' a'_1 < \angle o'o''p$.
	Since the three points $o'$, $o''$, and $a_1$ are collinear,
	this inequality implies that the line through $a'_1$ and $o''$
	splits the angle $\angle o'o''p$.
	Thus the cone with the angle $\angle o' a'_1 o''$ does not contain
	in its interior any integer endpoints in $V(\S')$.
	Thus $a'_1$ cannot see any integer endpoints from other gadgets.

	Moreover,
	no two endpoints $a'_1$ from two different gadgets can see each other,
	because the narrow ranges of slopes of viewing lines through $a'_1$,
	as illustrated by the dotted lines in Figure~\ref{fig:turn4},
	are disjoint for intersections of different orientations.

	The remaining claims of the lemma are easy to verify.
\end{proof}

The following lemma shows that our local transformation preserves linkability:

\begin{lemma}\label{lem:link}
	$\S$ can be linked into a simple polygon
	if and only if $\S'$ can be linked into a simple polygon.
\end{lemma}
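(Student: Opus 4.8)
The plan is to exploit the locality of the transformation together with Lemma~\ref{lem:gadget}. After the scaling by $40$, the transformation alters $\S$ only inside a small region $N_o$ around each intersection point $o$: in $\S$ this region contains just the two axis-parallel segments forming an ``L'' at $o$, whereas in $\S'$ it contains the corresponding gadget. Since all coordinates are integers before scaling, the regions $N_o$ are pairwise disjoint and $o$ is the only lattice point of the scaled picture inside $N_o$; by symmetry it suffices to treat the single gadget variant described in the text. Two properties will drive the proof. The \emph{turn property}: in any simple polygon $P$ linking $\S$, both $oa$ and $ob$ are edges of $P$, so $P$ already has degree $2$ at $o$ and must turn there; hence no visibility edge of $P$ is incident to $o$, and — with a little care about the region near $o$ — none of them meets $N_o$. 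The \emph{sealing property}: in $\S'$ the segment $a'b'$, together with $o'a$ and $o''b$, seals off the interior of each gadget, so that the small opening created at the base of $o''b$ when $o$ is split admits no new line of sight; combined with locality, this makes the visibility edges between non-gadget endpoints identical for $\S$ and for the scaled $\S'$.

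For the forward direction I would take a simple polygon $P$ linking $\S$, scale it by $40$, and at each intersection replace the turn through $o$ by the path that runs along $o'a$, then through $A'_1, B_1, a'b', B_2, A_2$ and the six connecting visibility edges of the gadget in that order, then along $o''b$ — the existence of those six visibility edges being the ``easy to verify'' clause of Lemma~\ref{lem:gadget}. By the turn property the visibility edges of $P$ all lie outside every $N_o$, the inserted edges all lie inside $N_o$, and by Lemma~\ref{lem:gadget} the inserted segments see nothing in other gadgets; hence the resulting closed curve is a simple polygon that uses every segment of $\S'$ once, so $\S'$ is linkable.

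For the reverse direction I would take a simple polygon $P'$ linking $\S'$ and pin down the routing inside each gadget. Since $P'$ is a single cycle through all of $\S'$, no proper subfamily of $\S'$ can close up into a cycle by itself, so by Lemma~\ref{lem:gadget} the five extra segments of a gadget reach the rest of $P'$ only through the two visibility edges available at $o'$ and $o''$; a short count then forces these five segments, with their visibility edges in $P'$, to form a simple path whose two ends attach to $o'$ and $o''$. Because $B_1$ sees only $A'_1$ and $a'b'$ and $B_2$ sees only $A_2$ and $a'b'$, neither $B_1$ nor $B_2$ can be an end of this path, which pins it down to $A'_1$--$B_1$--$a'b'$--$B_2$--$A_2$; and since $A_2$ cannot see $o'a$, the end at $o'$ must be $A'_1$ and the end at $o''$ must be $A_2$. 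Thus every gadget of $P'$ is traversed from $o'a$ to $o''b$ along this one path. Contracting each such path to the point $o$ (identifying $o'$ with $o''$) and undoing the scaling turns $P'$ into a closed curve through every segment of $\S$; it is simple because only the pairwise-disjoint regions $N_o$ were altered, and its remaining visibility edges are valid in $\S$ by the sealing property. Hence $\S$ is linkable.

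I expect the reverse direction to be the main obstacle: the crux is that the visibility restrictions of Lemma~\ref{lem:gadget}, together with the global requirement that a linking polygon be a single cycle, leave exactly one way to route through each gadget. The chief technical nuisance, needed in both directions, concerns the region near each intersection point $o$: one must verify that $a'b'$ genuinely seals each gadget, that the $\delta$-extension of $A_1$ (already handled inside Lemma~\ref{lem:gadget}) creates no spurious long-range visibility, and that linking polygons of $\S$ do not route any visibility edge through the small regions the gadgets will occupy — so that, away from the intersection points, the two pictures are truly interchangeable.
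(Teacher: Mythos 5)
Your proposal follows essentially the same route as the paper's proof: the forward direction replaces each corner of the polygon by the alternating path $o'a,A'_1,B_1,a'b',B_2,A_2,o''b$ through the gadget, and the reverse direction uses the visibility restrictions of Lemma~\ref{lem:gadget} (that $B_1$ and $B_2$ each see only two segments, and that $A_2$ cannot see $o'a$) to force this same routing uniquely before contracting each gadget back to the point $o$. The extra scaffolding you introduce (the disjoint neighborhoods $N_o$, the turn property, the sealing property) makes explicit some locality checks that the paper leaves implicit, but the substance of the argument is the same.
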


\begin{proof}
	We first prove the direct implication.
	Suppose that $\S$ can be linked into a simple polygon.
	For each visibility edge of this polygon between two endpoints in $V(\S)$,
	we add a visibility edge between the corresponding endpoints in $V(\S')$.
	For each endpoint $o$ incident to two segments $oa$ and $ob$ in $\S$,
	we link the segments in the corresponding gadget in $\S'$
	following the alternating path as illustrated in Figure~\ref{fig:gadget} right.
	Then $\S'$ is also linked into a simple polygon.

	We next prove the reverse implication.
	Suppose that $\S'$ can be linked into a simple polygon.
	Recall Lemma~\ref{lem:gadget}.
	In any gadget,
	$B_1$ can only see $A'_1$ and $a'b'$,
	and $B_2$ can only see $A_2$ and $a'b'$.
	Thus $a'b'$ must be linked to both $B_1$ and $B_2$,
	which are linked to $A'_1$ and $A_2$, respectively.
	This chain can extend further only to $o'a$ and $o''b$, respectively,
	since $A_2$ cannot see $o'a$.
	Thus the seven segments in each gadget must be linked consecutively
	in the simple polygon along the sequence
	$o'a,A'_1,B_1,a'b',B_2,A_2,o''b$.
	Then, following the other visibility edges of the simple polygon through $\S'$,
	which are outside and between the gadgets,
	$\S$ can be linked into a simple polygon too.
\end{proof}

Recall that the coordinates of endpoints in $V(\S)$
are integers of magnitude $O(n^2)$.
After the transformation,
all endpoints in $V(\S')$ except $a'_1$
have integer coordinates too.
We can scale all coordinates by another factor of $1/\delta = 800 n^2$,
so that all endpoints in $V(\S')$ including $a'_1$
have integer coordinates of magnitude $O(n^4)$,
and the reduction remains strongly polynomial.
Thus \scirc\ is strongly NP-hard,
even if the input segments are disjoint and have only four distinct orientations.

\section{Modification for \spath}

To prove that \spath\ is also NP-hard,
we use almost the same transformation from $\S$ to $\S'$ as before,
except two changes:
\begin{enumerate}\setlength\itemsep{0pt}

		\item
			Increase the initial scaling factor from $40$ to $80$,
			and correspondingly decrease the distance $\delta$ between $a_1$ and $a'_1$
			from $\frac1{800 n^2}$ to $\frac1{1600 n^2}$
			when constructing the gadgets.

		\item
			Select an arbitrary gadget as illustrated in Figure~\ref{fig:gadget},
			and replace it by an extended gadget as illustrated in Figure~\ref{fig:gap}.

\end{enumerate}

The purpose of the first change is to make room for additional segments
in the extended gadget.
Updating the scaling factor and the distance $\delta$ together
ensures that \eqref{eq:scaling} still holds,
and hence Lemma~\ref{lem:gadget} remains valid.
Clearly, the reduction remains strongly polynomial.

\begin{figure}[htb]
	\includegraphics[width=.48\textwidth]{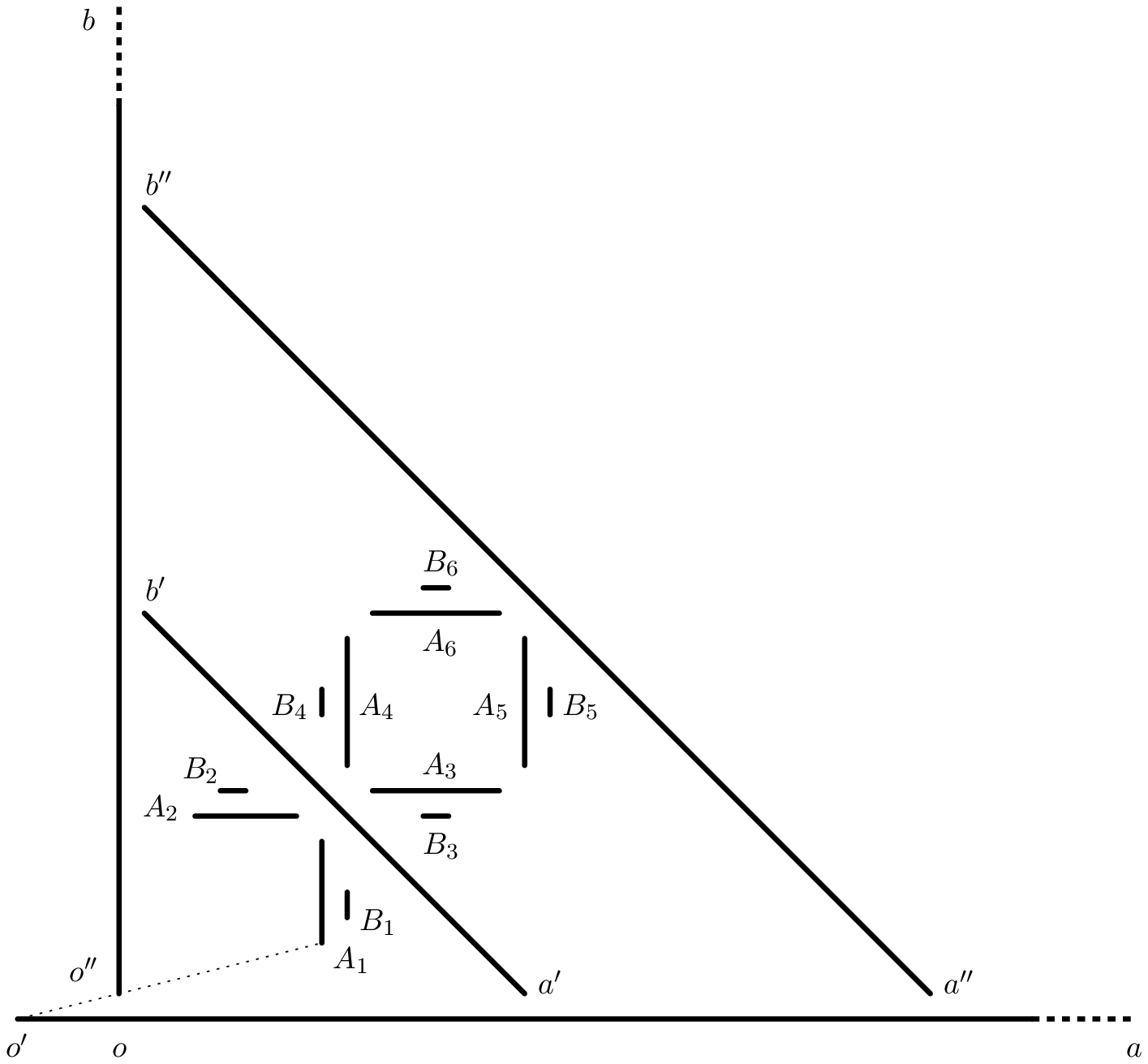}
	\hspace{\stretch1}
	\includegraphics[width=.48\textwidth]{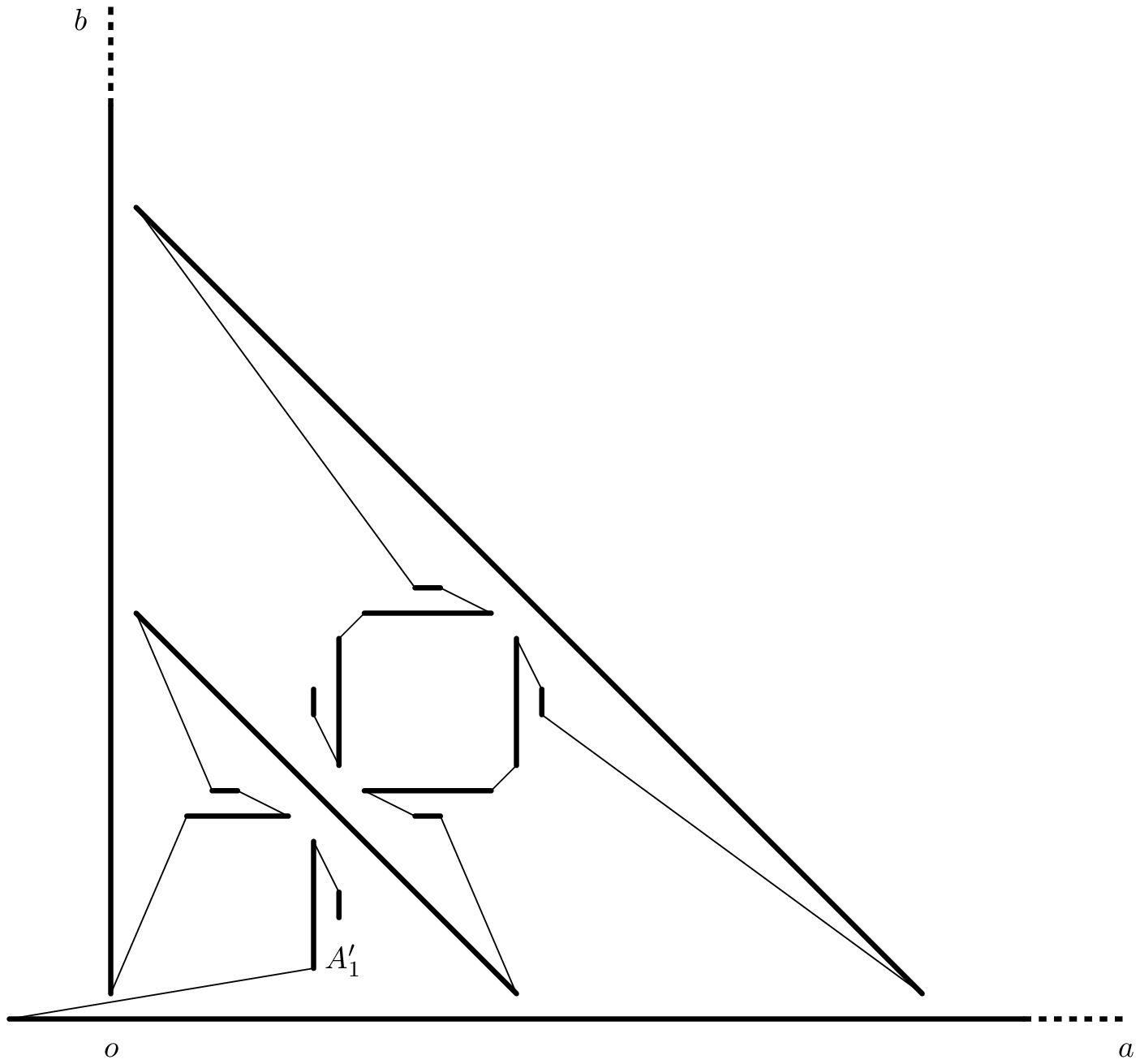}
	\caption{The extended gadget for the intersection $o$ between a horizontal segment $oa$
	and a vertical segment $ob$.
	Left:
	The coordinates of the points relative to $o$ are
	$o = (0,0)$, $o' = (-4,0)$, $o'' = (0,1)$,
	$a' = (16,1)$, $b' = (1,16)$,
	$A_1 = 8 \times [3,7]$,
	$B_1 = 9 \times [4,5]$,
	$A_2 = [3,7] \times 8$,
	$B_2 = [4,5] \times 9$,
	$a'' = (32,1)$, $b'' = (1,32)$,
	$A_3 = [10,15] \times 9$,
	$B_3 = [12,13] \times 8$,
	$A_4 = 9 \times [10,15]$,
	$B_4 = 8 \times [12,13]$,
	$A_5 = 16 \times [10,15]$,
	$B_5 = 17 \times [12,13]$,
	$A_6 = [10,15] \times 16$,
	$B_6 = [12,13] \times 17$.
	Right: An alternating path starting and ending in the extended gadget.}\label{fig:gap}
\end{figure}

We have the following lemma analogous to Lemma~\ref{lem:link}:

\begin{lemma}
	$\S$ can be linked into a simple polygon
	if and only if $\S'$ can be linked into a simple polygonal chain.
\end{lemma}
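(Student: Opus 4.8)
The plan is to mirror the proof of Lemma~\ref{lem:link}, using the extended gadget of Figure~\ref{fig:gap} as the single place where the closed polygon on $\S$ is opened into a path on $\S'$. By change~1 the scaling factor and $\delta$ are updated in tandem so that \eqref{eq:scaling} still holds; hence Lemma~\ref{lem:gadget} applies verbatim to every standard gadget, and the same computation on the coordinates listed with Figure~\ref{fig:gap} gives an analogous mutual-invisibility statement for the extended gadget and its nine added segments $a''b'',A_3,B_3,A_4,B_4,A_5,B_5,A_6,B_6$. I will also use the elementary fact that in \emph{any} simple polygon linking $\S$, the two segments $o^\ast a$ and $o^\ast b$ meeting at the corner $o^\ast$ chosen for the extended gadget are consecutive edges: $o^\ast$ is a vertex of the polygon, being an endpoint of each, and a vertex of a simple polygon has exactly two incident edges.

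For the forward implication, suppose $\S$ is linked into a simple polygon $P$. Delete from the cyclic edge sequence of $P$ the incidence of $o^\ast a$ and $o^\ast b$ at $o^\ast$; this linearizes the sequence, beginning with $o^\ast a$ (free at $o^\ast$) and ending with $o^\ast b$ (free at $o^\ast$). Build a linking of $\S'$ as follows: for each visibility edge of $P$, take the corresponding visibility edge of $\S'$; for each intersection other than $o^\ast$, insert the standard alternating path $o'a,A'_1,B_1,a'b',B_2,A_2,o''b$ of Figure~\ref{fig:gadget}(right) as a pass-through; and at $o^\ast$, where the split produces the stubs $o'a$ (free at $o'$) and $o''b$ (free at $o''$), insert the alternating path of Figure~\ref{fig:gap}(right), which attaches the nine added segments as two arms, one running from $o'$ to a dead end and one from $o''$ to a dead end. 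This yields a single alternating sequence of segments and visibility edges that covers every segment of $\S'$ and has exactly two loose ends, both inside the extended gadget. It is non-self-intersecting, since Lemma~\ref{lem:gadget} and its extended-gadget analogue confine every added visibility edge to one gadget, while within each gadget the explicit coordinates of Figures~\ref{fig:gadget} and~\ref{fig:gap} exhibit the path as simple, exactly as in the proof of Lemma~\ref{lem:link}. Hence $\S'$ is linked into a simple polygonal chain.

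For the reverse implication, suppose $\S'$ is linked into a simple polygonal chain $C$. By the invisibility statements, within any gadget the only segment endpoints that can see segments of other gadgets are the layout endpoints $a$ of $o'a$ and $b$ of $o''b$; hence $C$ restricted to a gadget is a disjoint union of $m\ge1$ sub-paths of $C$, whose $2m$ extreme endpoints include at most two ``outgoing'' ones and otherwise only loose ends of $C$, so that $m\le2$ in every gadget. A short inspection of the coordinates of Figure~\ref{fig:gap}(left) shows that the sixteen segments of the extended gadget cannot be linked into a single alternating path having $o'a$ and $o''b$ as its end segments; therefore $m=2$ for the extended gadget. Since $C$ has only two loose ends, both of them lie in the extended gadget, the remaining fourteen of its segments forming the two arms of Figure~\ref{fig:gap}(right), and consequently $m=1$ for every standard gadget, which by Lemma~\ref{lem:gadget} is traversed as a pass-through linking $o'a,A'_1,B_1,a'b',B_2,A_2,o''b$ along one sub-path, exactly as in the proof of Lemma~\ref{lem:link}. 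Now delete from $C$, in every gadget, all of its internal segments and the visibility edges interior to it, and re-identify $o'$ with $o''$ as a single point $o$: each standard pass-through becomes the incidence of its two layout segments at $o$, and the extended gadget becomes the splice that closes the remaining path into a cycle. What is left is $\S$ linked by the visibility edges of $C$ lying outside and between the gadgets --- a simple polygon on $\S$, simplicity following exactly as in the proof of Lemma~\ref{lem:link}.

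I expect the main obstacle to be the structural role of the extended gadget in the reverse implication: one must verify, from the explicit coordinates of Figure~\ref{fig:gap}, both that the sixteen segments admit no linking into a single alternating path from $o'a$ to $o''b$ --- which, combined with the loose-end count above, pins down the global picture --- and that the two-arm configuration does exist as a simple alternating path. These are finite checks, but noticeably more delicate than the single-corner check behind Lemma~\ref{lem:gadget}. The remaining points --- that \eqref{eq:scaling} survives the scaling update, that the rectilinear layout contains a corner $o^\ast$ of the required type (clear for $n\ge4$), and that the reduction stays strongly polynomial --- are routine.
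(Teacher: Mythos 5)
Your overall architecture matches the paper's: open the polygon at one corner, realize the two loose ends inside the extended gadget in the forward direction, and in the reverse direction force both chain ends into the extended gadget so that every other gadget is a degree-two pass-through as in Lemma~\ref{lem:link}. But the pivotal step of the reverse direction --- that the extended gadget cannot be traversed as a single alternating path from $o'a$ to $o''b$ and must instead absorb both loose ends --- is exactly the claim you defer to ``a short inspection of the coordinates'' and flag as the main obstacle. That is not a routine finite check to be postponed; it is the entire reason the extended gadget carries six segments $B_1,\dots,B_6$ and two hub segments $a'b',a''b''$. The paper proves it by a degree count: each $B_i$ can see only $A_i$ together with $a'b'$ (for $i=1,2$) or with $a'b'$ and $a''b''$ (for $i=3,\dots,6$), so every $B_i$ of degree two in the chain must consume one of the four link slots on $a'b'\cup a''b''$; since only two segments of the whole chain may have degree one, at least four of the six $B_i$ need slots, which saturates $a'b'$ and $a''b''$ and forces the remaining two $B_i$ to be the two ends of the chain. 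In particular a pass-through traversal, in which all six $B_i$ would have degree two and need six slots, is impossible. Your write-up contains neither the precise visibility statement for $B_3,\dots,B_6$ nor this counting, so the proof is incomplete at its one non-routine point.

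A secondary issue: your bound $m\le 2$ on the number of sub-paths per gadget rests on the assertion that only the far endpoints $a$ and $b$ can see other gadgets. Lemma~\ref{lem:gadget} asserts invisibility only for $A'_1,B_1,A_2,B_2$ (and its analogue for the added segments); it says nothing about $o'$, $o''$, $a'$, $b'$, which may well see points outside the gadget. The paper's argument never needs such a claim: it works purely with the forced adjacencies of the $B_i$'s and the saturation of $a'b'$ and $a''b''$, so you should replace the sub-path count by that local degree argument rather than try to repair the visibility claim.
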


\begin{proof}
	We first prove the direct implication.
	Suppose that $\S$ can be linked into a simple polygon.
	We link the segments in $\S'$ as before, except that in the extended gadget
	we link the segments as illustrated in Figure~\ref{fig:gap} right.
	Then $\S'$ is linked into a simple polygonal chain starting and ending in the extended gadget.

	We next prove the reverse implication.
	Suppose that $\S'$ can be linked into a simple polygonal chain.
	In the extended gadget,
	$B_1$ can only see $A'_1$ and $a'b'$,
	$B_2$ can only see $A_2$ and $a'b'$,
	and moreover for $i \in \{3,4,5,6\}$,
	$B_i$ can only see $A_i$, $a'b'$, and $a''b''$.
	Since $a'b'$ and $a''b''$ together can be linked to at most four other segments
	in the polygonal chain,
	and since the polygonal chain has only two vertices of degree one,
	it follows that among the six segments $B_i$, $1 \le i \le 6$,
	each of $a'b'$ and $a''b''$ must be linked to two distinct segments,
	and the polygonal chain must start and end at the other
	two segments.
	Thus the $12$ segments $A'_1,B_1,A_2,B_2,A_3,B_3,A_4,B_4,A_5,B_5,A_6,B_6$,
	and the two segments $a'b',a''b''$,
	are all linked locally to each other in the extended gadget,
	and they are then linked to $o'a$ and $o''b$, respectively, at $o'$ and $o''$.
	As before,
	in each gadget other than this extended gadget,
	the seven segments must be linked consecutively
	along the sequence
	$o'a,A'_1,B_1,a'b',B_2,A_2,o''b$.
	Then, following the other visibility edges of the simple polygonal chain through $\S'$,
	which are outside and between the gadgets,
	$\S$ can be linked into a simple polygon.
\end{proof}

Thus \spath\ is also NP-hard.
This completes the proof of Theorem~\ref{thm:hard}.

\paragraph{Postscript}
In SoCG 2019, Akitaya et~al.~\cite{AKRST19,AKRST19b} presented a proof of the NP-hardness
of \scirc\ using a gadget very similar to ours.
We thank Adrian Dumitrescu for bringing it to our attention.

Compare~\cite[Figure~13]{AKRST19} and~\cite[Figure~19]{AKRST19b} with our Figure~\ref{fig:gadget}.
Note in particular that their $p_4 p_5$ and $p_{12} p_{13}$ correspond to
our $A_2$ and $A_1$,
their $p'_2$ and $p_2$ correspond to our $o''$ and $o'$,
and their properties (i) and (ii) correspond to our Lemma~\ref{lem:gadget}.
The main idea behind the internal connections of their gadget is the same as ours.
However, to ensure property (ii),
they use a more complicated method to custom-make each gadget
individually~\cite[Figure 20]{AKRST19b}.
Our construction handles all gadgets uniformly by the small distance $\delta$,
the narrow viewing angle in Figure~\ref{fig:angle},
and the disjoint ranges of slopes for the dotted lines in Figure~\ref{fig:turn4}.

To the best of our knowledge, our result on the NP-hardness of \spath\ is new.

\end{document}